\newtheorem{theorem}{Theorem}
\newtheorem{lemma}{Lemma}
\newtheorem{corollary}{Corollary}
\newtheorem{proposition}{Proposition}
\newcommand{\E}{\ensuremath{\mathbb E}}
\newcommand{\ind}[1]{{\bf 1}_{\{#1\}}}
\newcommand{\deq}{\triangleq}
\def\wh#1{\ensuremath{\hat{#1}}}
\title{\Large \bf Directed Information and Pearl's Causal Calculus}
\author{Maxim Raginsky
\thanks{This work was supported by NSF grant CCF-1017564 and by AFOSR grant FA9550-10-1-0390.}
\thanks{The author is with the Department of Electrical and Computer Engineering, Duke University, Durham, NC. E-mail: {m.raginsky@duke.edu}.}}
\begin{document}

\maketitle

\begin{abstract}
Probabilistic graphical models are a fundamental tool in statistics, machine learning, signal processing, and control. When such a model is defined on a directed acyclic graph (DAG), one can assign a partial ordering to the events occurring in the corresponding stochastic system. Based on the work of Judea Pearl and others, these DAG-based ``causal factorizations" of joint probability measures have been used for characterization and inference of functional dependencies (causal links). This mostly expository paper focuses on several connections between Pearl's formalism (and in particular his notion of ``intervention") and information-theoretic notions of causality and feedback (such as causal conditioning, directed stochastic kernels, and directed information). As an application, we show how conditional directed information can be used to develop an information-theoretic version of Pearl's ``back-door'' criterion for identifiability of causal effects from passive observations. This suggests that the back-door criterion can be thought of as a causal analog of statistical sufficiency.
\end{abstract}

\thispagestyle{empty}
\pagestyle{empty}

\section{Introduction}
\label{sec:intro}

The problems of causality in engineered and natural systems have recently attracted the attention of information theorists and signal processing researchers \cite{RaoMotifDiscovery,MathaiGeneNetworks,RaoBioNetworks,AmblardNeuro,QuinnNeuro,QuinnCausalTrees}. The well-worn but nonetheless true maxim stating that ``correlation does not imply causation'' means that causal relationships cannot be captured by standard information-theoretic quantities like mutual information, conditional entropy, or divergence, because all of these are measures of {\em statistical dependence} (i.e., correlation). The first information-theoretic studies of causality were concerned with feedback communication systems and led to the development of the notion of {\em directed information} by Massey \cite{MasseyDirInfo}, with subsequent extensions and generalizations by Kramer, Tatikonda, and Mitter \cite{KramerThesis,TatikondaThesis,TatikondaMitter}. Connections between directed information and sequential prediction, source coding, and hypothesis testing have also been extensively investigated \cite{VenkataramananFeedforward,PermuterTrapdoor,GorantlaColemanCausal,PermuterDirInfo}.

However, causality has also been the subject of vigorous study in the statistics, artificial intelligence, and machine learning communities \cite{PearlPRIS,SpirtesBook,PearlCausalitySurvey,PearlCausality}. The key idea advanced in these works, particularly by Pearl, is that causality is synonymous with {\em functional} (rather than statistical) dependence. In other words, causal relationships correspond to stable deterministic mechanisms, by which one set of variables (the causes), together with some possibly unobserved exogenous disturbances, may affect another set of variables (the effects). Thus, inferring causal relationships requires {\em active experimentation} that {\em intervenes} into some of these mechanisms. In very schematic terms (this discussion will be made precise in the sequel), an ideal setting for identifying or estimating the ``causal effect'' of one observable (say, $X$) on another (say, $Y$) would permit the experimenter to disconnect $X$ from all mechanisms that influence it, force $X$ to take on some value(s) of interest, and then to estimate the probability distribution of $Y$ as a result of this intervention, while controlling for all possible spurious influences and factors. This is quite different from estimating the {\em statistical} effect of $X$ on $Y$, i.e., the conditional distribution $P_{Y|X}$, by means of passive observations, e.g., from a large number of independent samples from the joint distribution of $X,Y$.

The purpose of this mostly expository paper is to introduce the information theory, control, and signal processing communities to several key concepts of the probabilistic theory of causality and, along the way, to elucidate several connections between Pearl's treatment of interventions on the one hand, and information-theoretic concepts pertaining to causality (such as directed information \cite{MasseyDirInfo}, causal conditioning \cite{KramerThesis}, or directed stochastic kernels \cite{TatikondaThesis,TatikondaMitter}) on the other. In particular, the representation of causal relationships by Markov factorizations of joint probability distributions w.r.t.\ directed acyclic graphs (DAGs) \cite{PearlPRIS,SpirtesBook,PearlCausalitySurvey,PearlCausality}, such that the natural partial ordering of the vertices of the DAG corresponds to causal ordering of the events in the system under consideration, should be very congenial to systems theorists, who naturally think in terms of block diagrams, interconnections, and sequential recursive models.

Let us give a brief overview of the remainder of the paper. We first motivate the functional view of causality in Section~\ref{sec:causality} by means of a simple example of a point-to-point communication system. Next, in Section~\ref{sec:SDM_causality}, we develop the general framework for studying causality in {\em Markovian dynamical systems}. In particular, we motivate Pearl's definition of intervention as ``surgery'' on a sequential recursive representation of such a system, whereby the relations defining the intervened-upon variables are deleted, and all instances of these variables in the remaining relations are assigned to some fixed value. This operation has a natural diagrammatic representation on the DAG inducing the Markov factorization of the joint probability distribution of the system observables according to the sequential model. We also show that the probability distributions induced by this operation (i.e., what Pearl calls the {\em causal effects}) are in one-to-one correspondence with the directed stochastic kernels of Tatikonda and Mitter \cite{TatikondaThesis,TatikondaMitter}. This correspondence is then used in Section~\ref{sec:dir_info} to show how directed information (and certain generalizations, such as conditional directed information) can be used to quantify the strength of causal effects by comparing them with ordinary (observational) conditional distributions. Section~\ref{sec:back_door} develops an information-theoretic interpretation of Pearl's ``back-door'' criterion \cite[Sec.~3.3.1]{PearlCausality} (a sufficient condition for identifiability of causal effects from observational data) in terms of conditional directed information, showing in effect that the back-door criterion can be viewed as a natural causal analog of statistical sufficiency.

\section{Revealing causality through functional dependence}
\label{sec:causality}

To illustrate the difference between statistical dependence and causal dependence, consider the standard diagram of a point-to-point communication system without feedback, as shown in Figure~\ref{fig:comm_sys}. A message $W$ is mapped into a channel input symbol $X = e(W)$, $X$ is transmitted over a channel with transition kernel $P_{Y|X}$, and the resulting channel output symbol $Y$ is processed at the receiver into a decoded message $\tilde{W} = d(Y)$, where $e$ and $d$ are some deterministic encoding and decoding functions.

\begin{figure}
	\centerline{\includegraphics[width=0.8\columnwidth]{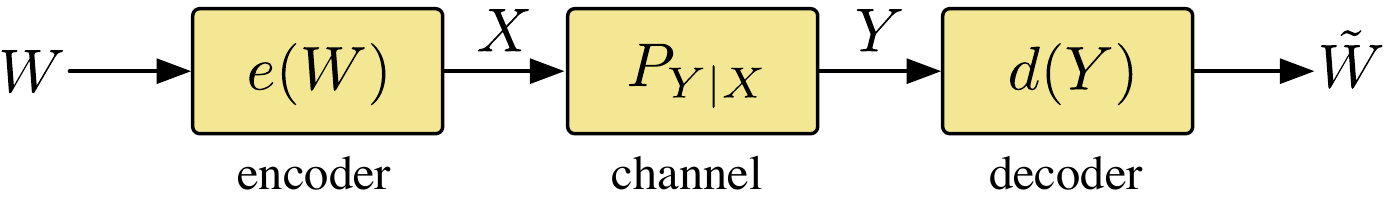}}
	\caption{\label{fig:comm_sys}A generic communication system without feedback.}
\end{figure}

It is intuitively clear that the message $W$ ``causes'' the decoded message $\tilde{W}$ and not the other way around, but we cannot tell this from the joint distribution of $W$, $X$, $Y$, and $\tilde{W}$. Indeed, we have
\begin{align*}
	&P_{WXY\tilde{W}}(w,x,y,\tilde{w}) \\
	& \quad = P_W(w) \ind{e(w)=x} P_{Y|X}(y|x)\ind{d(y) = \tilde{w}},
\end{align*}
so that the joint distribution of $W$ and $\tilde{W}$, given by
\begin{align*}
	P_{W\tilde{W}}(w,\tilde{w}) &= P_W(w)\sum_{x,y}\ind{e(w)=x} P_{Y|X}(y|x)\ind{d(y) = \tilde{w}} \\
	&= P_W(w)\sum_y P_{Y|X}(y|e(w)) \ind{d(y) = \tilde{w}} \\
	&\equiv P_W(w) P_{\tilde{W}|W}(\tilde{w}|w),
\end{align*}
can also be factored as $P_{W\tilde{W}}(w,\tilde{w}) = P_{\tilde{W}}(\tilde{w})P_{W|\tilde{W}}(w|\tilde{w})$, which merely shows that $W$ and $\tilde{W}$ are statistically dependent on one another. Indeed, to quote Massey \cite{MasseyDirInfo}, ``statistical dependence, unlike causality, has no inherent directivity.'' If the encoder, the channel, and the decoder are nondegenerate, so that $I(W; \tilde{W}) > 0$, then the dependence between the message $W$ and the decoded message $\tilde{W}$ is completely symmetric: $W$ depends on $\tilde{W}$, and $\tilde{W}$ depends on $W$.

In order to elicit the {\em causal} influence of the transmitted message on the decoded message, as well as the lack of causal influence in the opposite direction, we need to break this symmetry. To that end, let us represent the stochastic transformation $X \to Y$ effected by the channel $P_{Y|X}$ as a {\em deterministic} mapping $Y = f(X,U)$, where $U$ is random {\em channel noise}, assumed to be independent of $W$ and $X$. (Indeed, any stochastic kernel $P_{Y|X}$ can be represented in this form for a suitable choice of $f$ and $P_U$.) This representation is shown in Figure~\ref{fig:comm_sys_latent}.

\begin{figure}
	\centerline{\includegraphics[width=0.8\columnwidth]{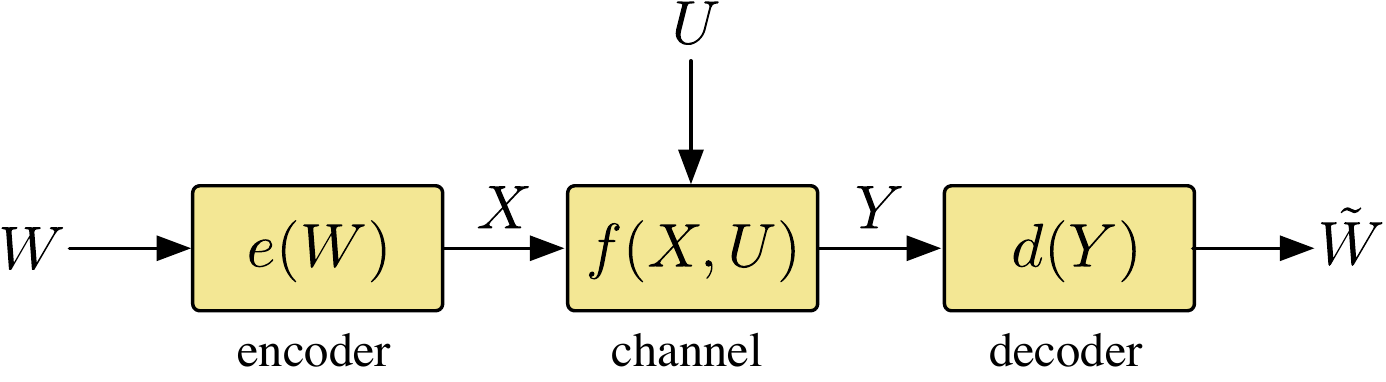}}
	\caption{\label{fig:comm_sys_latent}An equivalent diagram of the system in Figure~\ref{fig:comm_sys}.}
\end{figure}

Now we can represent our communication system in the following {\em sequential form}:
\begin{align}
	W & \sim P_W \nonumber\\
	U & \sim P_U \nonumber \\
	X & = e(W) \label{eq:channel_SDM}\\
	Y & = f(X,U)\nonumber \\
	\tilde{W} &= d(Y) \nonumber
\end{align}
What happens if we make a hard assignment $W \leftarrow w$ of a specific value $w$ to the transmitted message? Looking at the sequential model in \eqref{eq:channel_SDM}, we see that this action will influence the ``downstream'' variables $U,X,Y,\tilde{W}$ as follows:
\begin{align*}
	U & \sim P_U \\
	X & = e(w) \\
	Y & = f(e(w),U) \\
	\tilde{W} &= d(f(e(w),U)).
\end{align*}
The corresponding joint distribution of $U$, $X$, $Y$ and $\tilde{W}$ resulting from the action $W \leftarrow w$, which we will denote by $P_{UXY\tilde{W}|W\leftarrow w}$, has the form
\begin{align*}
	&P_{UXY\tilde{W}|W\leftarrow w}(u,x,y,\tilde{w})\\
	 &= P_U(u)\ind{e(w)=x}\ind{f(e(w),u)=y}\ind{d(f(e(w),u))=\tilde{w}}.
\end{align*}
Marginalizing out the channel noise $U$, the channel input $X$, and the channel output $Y$, we get
\begin{align*}
	P_{\tilde{W}|W\leftarrow w}(\tilde{w}) &= \sum_{u,y}P_U(u)\ind{f(e(w),u)=y}\ind{d(f(e(w),u))=\tilde{w}}.
\end{align*}
This distribution is, in fact, equal to the ordinary conditional distribution $P_{\tilde{W}|W=w}$, given by
\begin{align*}
	P_{\tilde{W}|W=w}(\tilde{w})
	&= \sum_y P_{Y|X}(y|e(w))\ind{d(y)=\tilde{w}} \\
	&= \sum_{u,y}P_U(u)\ind{f(e(w),u)=y}\ind{d(f(e(w),u))=\tilde{w}}.
\end{align*}
Again, assuming that the mappings $e$, $f$, $d$ are nondegenerate, there exist at least two values $w,w'$ for the transmitted message, for which $P_{\tilde{W}|W=w} \neq P_{\tilde{W}|W=w'}$ and, consequently, $P_{\tilde{W}|W \leftarrow w} \neq P_{\tilde{W}|W \leftarrow w'}$. In other words, the downstream effect of the hard assignment $W \leftarrow w$ is different from that of $W \leftarrow w'$.

Now let us consider what happens if we make a hard assignment $\tilde{W} \leftarrow \tilde{w}$ of the {\em decoded} message. One way to do this would be to replace the original decoding map $d$ with the constant map $d_{\tilde{w}}(y) = \tilde{w}$ for all $y$. The effect of this hard assignment on the remaining variables can be represented as
\begin{align*}
	W & \sim P_W \\
	U & \sim P_U \\
	X & = e(W) \\
	Y & = f(e(W),U)
\end{align*}
This clearly shows that the joint distribution of the ``upstream'' random variables $W,U,X,Y$ is unaffected by the action $\tilde{W} \leftarrow \tilde{w}$; in fact, exactly the same conclusion would hold if we replaced the original decoding map $d$ with any other decoding map $d'$. In other words,
\begin{align*}
	P_{WUXY|\tilde{W}\leftarrow \tilde{w}}= P_{WUXY}, \quad P_{W|\tilde{W} \leftarrow \tilde{w}} = P_W,
\end{align*}
which shows the absence of causal influence of $\tilde{W}$ on $W$.

\section{Causality in sequential dynamical systems}
\label{sec:SDM_causality}

The simple example of the preceding section illustrates the general treatment of causality advocated by Pearl. To motivate it, let us consider a stochastic dynamical system with multiple feedback loops and exogenous influences (or disturbances) shown in Figure~\ref{fig:SDM}. The exogenous disturbances are modeled by $n$ random variables $U_1,\ldots,U_n$ with a fixed joint distribution $P_{U^n} = P_{U_1\ldots U_n}$, while the system observables are represented by $n$ variables $X_1,\ldots,X_n$, related to $U^n$ and to one another by $n$ coupled equations
\begin{align}\label{eq:GDM}
	X_i = f_i(X^n,U^n), \qquad i \in [n]
\end{align}
We assume that the system specification is sound in the sense that the equations \eqref{eq:GDM} have a unique solution $X^n=x^n$ for any realization $U^n = u^n$ of the exogenous variables. This representation of stochastic dynamical systems as multiple feedback loops was used by Witsenhausen \cite{WitsenhausenInfoStruct,WitsenhausenSep,WitsenhausenStandardSequential} in his seminal work on distributed control systems.

\begin{figure}
\centerline{\includegraphics[width=0.5\columnwidth]{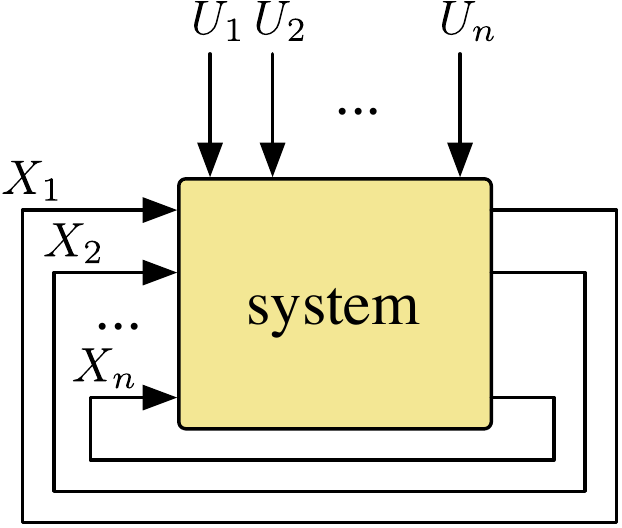}}
	\caption{\label{fig:SDM} A generic stochastic dynamical system with multiple feedback loops and exogenous disturbances.}
\end{figure}

This description allows for arbitrary dependencies between the observables $X_1,\ldots,X_n$, including cycles of the form $X_j = f_j(X_i,U_j)$, $X_k = f_k(X_j,U_k)$, $X_i = f_i(X_k,U_i)$. In order to study causality, we will limit ourselves to {\em sequential} dynamical systems, in which the observables $X_1,\ldots,X_n$ are ordered in such a way that, for each $i \in [n]$, there exists a set $\Pi_i \subseteq [i-1]$, such that the function $f_i$ depends essentially only on $X^{\Pi_i} \deq (X_j : j \in \Pi_i)$ and on $U_i$:
\begin{align}\label{eq:SDM}
	X_i = f_i(X^{\Pi_i},U_i), \qquad i \in [n]
\end{align}
Moreover, if for each $i$ the exogenous variable $U_i$ is independent of $(X^{i-1},U^{i-1})$, then the sequential model \eqref{eq:SDM} specifies the joint distribution $P_{X^n}$ via the {\em Markov factorization}
\begin{align}\label{eq:SDM_Markov}
	P_{X^n}(x^n) &= \prod^n_{i=1}P_{X_i|X^{\Pi_i}}(x_i|x^{\Pi_i}),
\end{align}
where, for each $i \in [n]$, 
\begin{align}
	P_{X_i|X^{\Pi_i}}(x_i|x^{\Pi_i}) = P_{U_i}\left( f_i(x^{\Pi_i},U_i) = x_i \right),
\end{align}
and $X^{[i-1]\backslash \Pi_i} \to X^{\Pi_i} \to X_i$ is a Markov chain. We will refer to any stochastic dynamical system specified by \eqref{eq:SDM} with independent disturbances $U_1,\ldots,U_n$ as a {\em Markovian dynamical system}. Apparently, one of the earliest attempts to study causality by means of simple Markovian models of this sort was made in the 1920's by the geneticist Sewall Wright \cite{WrightGenetics}.

The Markov factorization \eqref{eq:SDM_Markov} can also be represented in graphical form by means of a directed graph with $n$ vertices, where vertex $i$ is associated with $X_i$, and there is a directed edge from vertex $j$ to vertex $i$ if and only if $j \in \Pi_i$. Because $\Pi_i \subseteq [i-1]$, we end up with a DAG. Since we will use this graphical representation rather heavily in the sequel, let us pause to define some concepts associated with DAGs. Given $i \in [n]$, we let $\Delta_i \subset [n]$ denote the set of all {\em descendants} of $i$, i.e., the set of all $j \in [n]\backslash \{i\}$, such that there is a directed path from $i$ to $j$.  Similarly, we let $A_i$ denote the set of all {\em ancestors} of $i$, i.e., all $j \in [n]\backslash \{i\}$ connected to $i$ by directed paths. We also let  $\Delta^+_i \deq \Delta_i \cup \{i\}$, so that $N_i \deq [n] \backslash \Delta^+_i$ is the set of all {\em nondescendants} of $i$.   Note that
\begin{align}\label{eq:ancestral_inclusion}
 \bigcup_{j \in N_i}A_j \subset N_i.
\end{align}
Indeed, if for some $j \in N_i$ there exists some $k \in A_j \cap \Delta^+_i$, then there is a directed path from $i$ to $j$ going through $k$, which is impossible by the definition of $N_i$.

\subsection{Interventions in Markovian dynamical systems}

Consider a Markovian dynamical system specified according to \eqref{eq:SDM}. Just as we did in the simple example of Section~\ref{sec:causality}, we can study the causal effect of one set of variables $X^S$, $S \subset [n]$, on another set $X^T$ with $S \cap T = \varnothing$ by examining the impact of hard assignments of the form $X^S \leftarrow x^S$ on $X^T$. The main idea is to start with the recursive representation \eqref{eq:SDM}, delete all equations defining the variables $X_i, i \in S$, and replace all other instances of these variables with the assigned values. For example, the effect of what Pearl calls an {\em atomic intervention} $X_i \leftarrow x_i$ can be represented as the following modification of \eqref{eq:SDM}:
\begin{align}\label{eq:SDM_interv_i}
	X_j &= \begin{cases}
	f_j(X^{\Pi_j},U_j)\big|_{X_i = x_i}, & \text{if } j \in \Delta_i \\
	f_j(X^{\Pi_j},U_j), & \text{if } j \in N_i
\end{cases}
\end{align}
Now, for any set $T \subseteq [n]\backslash \{i\}$, let $P_{X^T|X_i \leftarrow x_i}$ denote the probability distribution of $X^T$ induced by the modified model \eqref{eq:SDM_interv_i}. Other notation used by Pearl and coauthors includes $P_{X^T|\wh{X}_i = \wh{x}_i}$ (where hats are added to the intervened-upon variables and the values assigned to them) and $P_{X^T|\text{do}(X_i = x_i)}$; we will use some of these interchangeably. The main claim is that these {\em interventional distributions} describe the {\em causal} effect of $X_i$ upon $X^T$. Let us see some illustrations in support of this claim.

First of all, we would intuitively expect that the intervention $X_i \leftarrow x_i$ would only affect the descendants of $i$. This is indeed true:

\begin{lemma}\label{lm:nondescendants} For any $T \subseteq N_i$ and any intervention $X_i \leftarrow x_i$,
	\begin{align*}
		P_{X^T|X_i \leftarrow x_i} = P_{X^T},
	\end{align*}
	where the distribution $P_{X^T}$ on the right-hand side is induced by the original model \eqref{eq:SDM}.
\end{lemma}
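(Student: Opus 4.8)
The plan is to exploit the fact that the index set $N_i = [n]\setminus\Delta^+_i$ is \emph{ancestrally closed}: by \eqref{eq:ancestral_inclusion}, every $j\in N_i$ has $\Pi_j\subseteq A_j\subseteq N_i$ (and of course $\Pi_j\subseteq[j-1]$). Consequently the subcollection of the structural equations \eqref{eq:SDM} indexed by $j\in N_i$ involves only the observables $\{X_k:k\in N_i\}$ and the disturbances $\{U_j:j\in N_i\}$, and since $i\notin N_i$, none of these equations even mentions $X_i$. This is precisely the block of equations that the intervention $X_i\leftarrow x_i$ leaves untouched, according to the second branch of \eqref{eq:SDM_interv_i}.

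First I would solve this $N_i$-block recursively. Writing $N_i=\{j_1<\dots<j_m\}$, one has $\Pi_{j_k}\subseteq N_i\cap[j_k-1]=\{j_1,\dots,j_{k-1}\}$, so by induction on $k$ the value $X_{j_k}$ is a measurable function of $U_{j_1},\dots,U_{j_k}$. Assembling these maps yields a measurable $g$ with $X^{N_i}=g(U^{N_i})$; the recursion simultaneously exhibits this $g$ and shows that any solution of \eqref{eq:SDM}, restricted to the coordinates in $N_i$, must coincide with it. Now run the identical recursion on the intervened model \eqref{eq:SDM_interv_i}: for $j\in N_i$ the defining relation is \emph{verbatim} $X_j=f_j(X^{\Pi_j},U_j)$ with $\Pi_j\subseteq N_i$, so the same induction produces the very same map $g$, i.e.\ $X^{N_i}=g(U^{N_i})$ under the intervention as well.

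It then remains only to note that the intervention does not alter the exogenous law: $U^{N_i}$ retains its distribution $P_{U^{N_i}}$ both before and after. Hence $X^{N_i}=g(U^{N_i})$ has one and the same law --- the distribution of $g(U^{N_i})$ --- in the original and in the intervened model, and $P_{X^T|X_i\leftarrow x_i}=P_{X^T}$ for every $T\subseteq N_i$ follows by marginalization.

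The only genuinely substantive step is the ancestral-closure implication ``$j\in N_i\Rightarrow\Pi_j\subseteq N_i$,'' which is exactly the content of \eqref{eq:ancestral_inclusion}; everything else is bookkeeping with the recursion. A purely distribution-level alternative is also available: from \eqref{eq:SDM_interv_i} one reads off the truncated factorization $P_{X^{[n]\setminus\{i\}}|X_i\leftarrow x_i}(x^{[n]\setminus\{i\}})=\prod_{j\neq i}P_{X_j|X^{\Pi_j}}(x_j|x^{\Pi_j})$ (with $x_i$ substituted wherever $i\in\Pi_j$), and then sums out the variables $x_j$, $j\in\Delta_i$, in decreasing order of their indices. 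Each such summation contributes a factor $\sum_{x_j}P_{X_j|X^{\Pi_j}}(x_j|x^{\Pi_j})=1$, since any $k$ with $j\in\Pi_k$ is itself a descendant of $i$ of larger index and has already been eliminated; what survives is $\prod_{j\in N_i}P_{X_j|X^{\Pi_j}}(x_j|x^{\Pi_j})$, which by the same reverse-elimination argument applied to \eqref{eq:SDM_Markov} is exactly the $X^{N_i}$-marginal of $P_{X^n}$. I would present the structural argument as the main proof, since it is more transparent and does not presuppose the truncated-factorization formula.
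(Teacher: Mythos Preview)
Your proposal is correct and follows essentially the same approach as the paper: both hinge on the ancestral-closure fact \eqref{eq:ancestral_inclusion} to conclude that the equations defining $X^{N_i}$ in \eqref{eq:SDM_interv_i} do not involve $X_i$ (or any descendant of $i$), and hence coincide with those in \eqref{eq:SDM}. The paper compresses this into two sentences, whereas you spell out the recursive construction $X^{N_i}=g(U^{N_i})$ and offer an alternative truncated-factorization argument; but the substance is the same.
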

\begin{proof} Because of \eqref{eq:ancestral_inclusion}, no $X_k$ with $k \in \Delta^+_i$ appears in any of the equations defining $X^{N_i}$ in \eqref{eq:SDM_interv_i}. Hence, the joint distribution of $X^{N_i}$ in the modified model \eqref{eq:SDM_interv_i} is the same as in the original model \eqref{eq:SDM}.
\end{proof}
\noindent Since $\Pi_i \subseteq N_i$, we have
\begin{corollary} For any $i \in [n]$ and any intervention $X_i \leftarrow x_i$,
	\begin{align*}
		P_{X^{\Pi_i}|X_i \leftarrow x_i} = P_{X^{\Pi_i}}.
	\end{align*}
\end{corollary}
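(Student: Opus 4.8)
The plan is to obtain the Corollary as an immediate consequence of Lemma~\ref{lm:nondescendants}: once we know $\Pi_i \subseteq N_i$, we simply apply the Lemma with $T = \Pi_i$. So the only thing to check is the inclusion $\Pi_i \subseteq N_i$.

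To verify this, recall first that $\Pi_i \subseteq [i-1]$ by the defining property of a sequential dynamical system in \eqref{eq:SDM}. It therefore suffices to show $[i-1] \cap \Delta^+_i = \varnothing$, i.e., that no vertex of index $< i$ can be a descendant of $i$. This is where acyclicity enters, in the convenient guise of the index ordering: a directed edge from $j$ to $k$ exists only when $j \in \Pi_k \subseteq [k-1]$, hence $j < k$. Consequently, along any directed path $i \to \cdots \to j$ the index strictly increases, so $j > i$ for every $j \in \Delta_i$; thus $\Delta^+_i \subseteq \{i, i+1, \ldots, n\}$ and $[i-1] \subseteq N_i$. In particular $\Pi_i \subseteq [i-1] \subseteq N_i$, and Lemma~\ref{lm:nondescendants} applied to $T = \Pi_i$ gives $P_{X^{\Pi_i}|X_i \leftarrow x_i} = P_{X^{\Pi_i}}$.

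There is essentially no obstacle here; the entire content is the bookkeeping remark that the parents of a node are among its nondescendants, which is just the acyclicity of the DAG re-expressed through the vertex labeling. If one prefers not to invoke the index ordering at all, the same point can be made directly: a vertex $j \in \Pi_i$ that were also a descendant of $i$ would close a directed cycle $i \to \cdots \to j \to i$, contradicting the fact that the underlying graph is a DAG. Either way, the argument is a single line once \eqref{eq:SDM} and Lemma~\ref{lm:nondescendants} are in place, which is presumably why the paper states it as a corollary with the one-clause justification ``Since $\Pi_i \subseteq N_i$.''
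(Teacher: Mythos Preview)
Your proposal is correct and follows exactly the paper's approach: the paper simply prefaces the corollary with ``Since $\Pi_i \subseteq N_i$, we have'' and invokes Lemma~\ref{lm:nondescendants}, and you have merely unpacked why that inclusion holds. There is nothing to add.
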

\noindent The extension to multiple interventions of the form $X^S \leftarrow x^S$ is immediate: defining the sets
\begin{align*}
	\Delta_S \deq \bigcup_{i \in S}\Delta_i,\quad \Delta^+_S \deq \Delta_S \cup S,\quad N_S \deq [n]\backslash \Delta^+_S
\end{align*}
we can represent the effect of the intervention $X^S \leftarrow x^S$ on $X^{S^c} = (X_j : j \not\in S)$ by
\begin{align}\label{eq:SDM_interv_S}
	X_j &= \begin{cases}
	f_j(X^{\Pi_j},U_j)\big|_{X^S = x^S}, & \text{if } j \in \Delta_S \\
	f_j(X^{\Pi_j},U_j), & \text{if } j \in N_S
\end{cases}
\end{align}
and, for any $T \subset [n]\backslash S$, the interventional distribution $P_{X^T|X^S \leftarrow x^S}$ is given by the joint distribution of $X^T$ induced by \eqref{eq:SDM_interv_S}. Going through the same reasoning as before, we obtain the following generalization of Lemma~\ref{lm:nondescendants}:
\begin{lemma}\label{lm:nondescendants_2} For any $S \subseteq [n]$, any $T \subseteq N_S$, and any intervention $X^S \to x^S$,
	\begin{align*}
		P_{X^T|X^S \leftarrow x^S} = P_{X^T}.
	\end{align*}
\end{lemma}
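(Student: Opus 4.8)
The plan is to imitate the proof of Lemma~\ref{lm:nondescendants}, the only new ingredient being the right set-theoretic closure property for the descendant set of a \emph{subset} $S$ rather than a singleton. The surgery in \eqref{eq:SDM_interv_S} leaves the equations for the nondescendants $X^{N_S}$ syntactically untouched, so once we know that this subsystem references only variables in $N_S$, its solution must coincide with the corresponding marginal of the original model.

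First I would establish the analog of \eqref{eq:ancestral_inclusion} for $S$, namely $\bigcup_{j \in N_S} A_j \subseteq N_S$. The argument is the same contradiction: if $j \in N_S$ and $k \in A_j \cap \Delta^+_S$, then there is some $i \in S$ with a directed path from $i$ to $k$ (or $k = i$) and a directed path from $k$ to $j$; concatenating these gives a directed path from $i$ to $j$, so $j \in \Delta_i \subseteq \Delta_S \subseteq \Delta^+_S$, contradicting $j \in N_S$. Since a parent is in particular an ancestor, $\Pi_j \subseteq A_j$ for every $j$, and therefore $\Pi_j \subseteq N_S$ for all $j \in N_S$.

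Next I would observe that this makes the sub-collection of equations $\{\, X_j = f_j(X^{\Pi_j},U_j) : j \in N_S \,\}$ self-contained: every variable appearing on a right-hand side already lies in $\{X_j : j \in N_S\}$, so no instance of any intervened-upon variable $X_i$, $i \in S$ (nor of any descendant of $S$), occurs, and the clause $f_j(X^{\Pi_j},U_j)\big|_{X^S = x^S}$ never fires for $j \in N_S$. Hence the equations for $X^{N_S}$ in the modified model \eqref{eq:SDM_interv_S} are literally identical to those in \eqref{eq:SDM}. The joint law of the disturbances $(U_j : j \in N_S)$ is likewise unchanged, and the induced ordering on $N_S$ together with these equations uniquely determines $X^{N_S}$; therefore the marginal distribution of $X^{N_S}$ is the same under \eqref{eq:SDM_interv_S} and under \eqref{eq:SDM}. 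Restricting to any $T \subseteq N_S$ gives $P_{X^T | X^S \leftarrow x^S} = P_{X^T}$.

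I do not anticipate a genuine obstacle here; the one point worth stating carefully is the closure property $\Pi_j \subseteq N_S$ for $j \in N_S$, which is exactly what guarantees that the $X^{N_S}$-subsystem can be ``simulated'' on its own and is untouched by the surgery --- the same role that \eqref{eq:ancestral_inclusion} plays in the single-intervention case.
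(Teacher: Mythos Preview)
Your proposal is correct and follows essentially the same approach as the paper: the paper does not even spell out a separate proof for Lemma~\ref{lm:nondescendants_2}, but simply states that ``going through the same reasoning as before'' (i.e., the argument for Lemma~\ref{lm:nondescendants} based on \eqref{eq:ancestral_inclusion}) yields the generalization to arbitrary $S$. Your write-up makes that reasoning explicit, in particular the set-level closure $\bigcup_{j\in N_S} A_j \subseteq N_S$, which is exactly the intended extension.
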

\noindent On the other hand, let us pick some $i \in [n]$ and consider the causal effect of the intervention $X^{\Pi_i} \leftarrow x^{\Pi_i}$ upon $X_i$:
\begin{lemma}\label{lm:parents}
For any $S \subset [n]$ and any intervention $X^{\Pi_S} \leftarrow x^{\Pi_S}$, we have
\begin{align*}
	P_{X_S|X^{\Pi_S} \leftarrow x^{\Pi_S}} = P_{X^S|X^{\Pi_S}=x^{\Pi_S}}.
\end{align*}	
Moreover, for any $T \subseteq (S \cup \Pi_S)^c$ and any intervention $X^T \leftarrow x^T$, we have
\begin{align*}
	P_{X^S|X^{\Pi_S} \leftarrow x^{\Pi_S}, X^T \leftarrow x^T} = P_{X^S|X^{\Pi_S} \leftarrow x^{\Pi_S}} = P_{X^S|X^{\Pi_S}=x^{\Pi_S}}.
\end{align*}
\end{lemma}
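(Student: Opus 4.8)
The plan is to show that both $P_{X^S|X^{\Pi_S}\leftarrow x^{\Pi_S}}$ and $P_{X^S|X^{\Pi_S}=x^{\Pi_S}}$ equal the same product measure $\bigotimes_{i\in S}P_{X_i|X^{\Pi_i}=x^{\Pi_i}}$, and then to observe that adjoining a further intervention $X^T\leftarrow x^T$ with $T\subseteq(S\cup\Pi_S)^c$ cannot change this. First I would write down the surgically modified model \eqref{eq:SDM_interv_S} for the intervention set $\Pi_S=\bigcup_{i\in S}\Pi_i$. The crucial observation is that for every $i\in S$ one has $\Pi_i\subseteq\Pi_S$, so the equation governing $X_i$ collapses to $X_i=f_i(x^{\Pi_i},U_i)$: this is immediate when $i$ is a descendant of $\Pi_S$, and when $i$ is a nondescendant it forces $\Pi_i=\varnothing$ (otherwise $i$ would be a child, hence a descendant, of some node of $\Pi_S$), so the identity holds vacuously; the degenerate subcase $i\in S\cap\Pi_S$ simply gives $X_i=x_i$, which is also what conditioning produces. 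Since the disturbances $U_1,\dots,U_n$ are mutually independent — a consequence of the standing hypothesis that $U_i$ is independent of $(X^{i-1},U^{i-1})$ — the variables $(X_i:i\in S)$ are independent in the post-surgery model, each distributed as $f_i(x^{\Pi_i},U_i)$.

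Next I would identify each factor with an observational conditional: by the kernel formula $P_{X_i|X^{\Pi_i}}(x_i|x^{\Pi_i})=P_{U_i}(f_i(x^{\Pi_i},U_i)=x_i)$ recorded after \eqref{eq:SDM_Markov}, together with the fact that $U_i$ is independent of $X^{\Pi_i}$ (valid since $\Pi_i\subseteq[i-1]$), the law of $f_i(x^{\Pi_i},U_i)$ is exactly $P_{X_i|X^{\Pi_i}=x^{\Pi_i}}$. Hence $P_{X^S|X^{\Pi_S}\leftarrow x^{\Pi_S}}=\bigotimes_{i\in S}P_{X_i|X^{\Pi_i}=x^{\Pi_i}}$. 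The only genuinely delicate step is the matching claim on the observational side, namely $P_{X^S|X^{\Pi_S}=x^{\Pi_S}}=\bigotimes_{i\in S}P_{X_i|X^{\Pi_i}=x^{\Pi_i}}$: one must show that conditioning on all of $X^{\Pi_S}$ both decouples the $X_i$, $i\in S$, from one another and reduces the conditional law of each $X_i$ to one involving only $X^{\Pi_i}$. This is a Markov (d-separation) statement about the DAG carrying the factorization \eqref{eq:SDM_Markov}, and it is the place where the graph structure — rather than merely the recursive form of the model — is used; the ancestral inclusion \eqref{eq:ancestral_inclusion} and Lemma~\ref{lm:nondescendants} are the natural bookkeeping tools, ensuring that $X^{\Pi_S}$ screens off each $X_i$ in the required sense under the standing conventions.

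Finally, for the second assertion I would rerun the computation with the enlarged intervention set $\Pi_S\cup T$. For $i\in S$, the equation governing $X_i$ depends only on $X^{\Pi_i}$ and $U_i$; since $\Pi_i\subseteq\Pi_S$ and $T\cap\Pi_S=\varnothing$, the values substituted for the $T$-variables never appear in it, so $X_i=f_i(x^{\Pi_i},U_i)$ is unchanged, and the independence of $(X_i:i\in S)$ is likewise unaffected because it relied only on the independence of the $U_i$. Therefore $P_{X^S|X^{\Pi_S}\leftarrow x^{\Pi_S},\,X^T\leftarrow x^T}=\bigotimes_{i\in S}P_{X_i|X^{\Pi_i}=x^{\Pi_i}}$, which by the first part equals both $P_{X^S|X^{\Pi_S}\leftarrow x^{\Pi_S}}$ and $P_{X^S|X^{\Pi_S}=x^{\Pi_S}}$. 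I expect the main obstacle to be precisely the observational factorization in the middle step: the interventional distributions are read off directly from the surgery, but showing that the passive conditional factorizes in the same way is where the Markov structure of \eqref{eq:SDM_Markov} has to be invoked with care.
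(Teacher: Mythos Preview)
Your approach mirrors the paper's: both compute the post-surgery equations $X_j=f_j(x^{\Pi_j},U_j)$ for $j\in S$, use independence of the $U_j$ to obtain the interventional law as $P_{U^S}\bigl(f_j(x^{\Pi_j},U_j)=x_j,\ \forall j\in S\bigr)$, and then claim this equals $P_{X^S|X^{\Pi_S}=x^{\Pi_S}}$. The paper asserts that last equality in one line; you are more explicit, writing the product $\bigotimes_{i\in S}P_{X_i|X^{\Pi_i}=x^{\Pi_i}}$ and correctly isolating the observational factorization as the delicate step, proposing to recover it via d-separation.

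The trouble is that the step you flag as delicate actually fails for general $S$, so no d-separation argument will rescue it. Take the chain $X_1\to X_2\to X_3$ with $S=\{1,3\}$, so $\Pi_1=\varnothing$, $\Pi_3=\{2\}$, $\Pi_S=\{2\}$. Intervening gives $P_{X_1,X_3|X_2\leftarrow x_2}=P_{X_1}\otimes P_{X_3|X_2=x_2}$, whereas conditioning gives $P_{X_1,X_3|X_2=x_2}=P_{X_1|X_2=x_2}\otimes P_{X_3|X_2=x_2}$; these differ whenever $X_1$ and $X_2$ are dependent. The obstruction is precisely that $\Pi_S$ can contain \emph{descendants} of some $i\in S$ (here $2\in\Delta_1$), and conditioning on such descendants distorts the law of $X_i$ in a way that intervention does not --- so ``$X^{\Pi_S}$ screens off each $X_i$'' is simply false. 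Your factorization $P_{X^S|X^{\Pi_S}=x^{\Pi_S}}=\bigotimes_{i\in S}P_{X_i|X^{\Pi_i}=x^{\Pi_i}}$ holds when $S$ is a singleton (since then $\Pi_i\subseteq[i-1]\subseteq N_i$) but not in general. The paper's proof elides exactly this point; you have located the gap but not closed it, because it cannot be closed at the stated level of generality.
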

\begin{proof} Observe that, as a result of the intervention $X^{\Pi_S} \leftarrow x^{\Pi_S}$, we have
	\begin{align*}
		X_j = f_j(x^{\Pi_j},U_j), \qquad \forall j \in S
	\end{align*}
which means that, for any $x^S$ and any additional intervention $X^T \leftarrow x^T$, where $T$ is disjoint form $S \cup \Pi_S$, we have
\begin{align*}
&	P_{X^S|X^{\Pi_S}\leftarrow x^{\Pi_S},X^T \leftarrow x^T}(x^S) \\
& \qquad = P_{U^S}\left( f_j(x^{\Pi_j},U_j) = x_j, \forall j \in S\right) \\
& \qquad = P_{X^S|X^{\Pi_S}\leftarrow x^{\Pi_S}}(x^S) \\
& \qquad = P_{X^S|X^{\Pi_S}=x^{\Pi_S}}(x^S).
\end{align*}
In other words, the joint distribution of $X^S$ induced by \eqref{eq:SDM_interv_S} is unaffected by $X^T \leftarrow x^T$.
\end{proof}
In terms of the Markov factorization \eqref{eq:SDM_Markov}, we can express the interventional distributions $P_{X^T|X^S \leftarrow x^S}$ for any pair of disjoint sets $S,T \subset [n]$ as follows. First, we write down the ``global" interventional distribution of $X^{S^c}$ given the action $X^S \leftarrow x^S$,
\begin{align}\label{eq:SDM_Markov_interv}
	P_{X^{S^c}|X^S \leftarrow x^S}(x^{S^c}) &= \prod_{i \in S^c} P_{X_i|X^{\Pi_i}}(x_i|x^{\Pi_i}),
\end{align}
and then marginalize out all variables outside of $T$:
\begin{align}\label{eq:SDM_Markov_interv_marginal}
	P_{X^T|X^S\leftarrow x^S}(x^T) &= \sum_{x^{S^c \cap T^c}} P_{X^{S^c}|X^S \leftarrow x^S}(x^{S^c})
\end{align}
Note that, in general, this is different from the ordinary conditional distribution $P_{X^T|X^S  = x^S}$, which has the following standard interpretation in Bayesian terms: Suppose we can only observe $X^S$, but not $X^{S^c}$. If we let system evolve freely according to \eqref{eq:SDM} and then {\em observe} that $X^S = x^S$, then $P_{X^T|X^S = x^S}$ represents our {\em posterior beliefs} about $X^T$ based on the {\em observed evidence} $X^S = x^S$.

\subsection{Interventions in graphical models}
\label{ssec:graphical_models}

Graphical model representations of Markovian dynamical systems offer a convenient visual way of computing interventional distributions. Essentially, if we wish to write down the interventional distribution $P_{X^{S^c}|X^S \leftarrow x^S}$, we draw the corresponding DAG, remove all edges incident upon the vertices in $S$, and write down the joint distribution of $X^{S^c}$ induced by the resulting DAG, while setting $X^S$ to the assigned values $x^S$. 

Let us see this on a couple of examples. Consider the following graphical model:
\begin{align*}
	\xymatrix{
	X_1 \ar[rr]\ar[dr]&  & X_4 \ar[dr] & \\
	    & X_3\ar[rr]\ar[dr]    &     & X_6 \\
	X_2\ar[ur]&       & X_5\ar[ur]
	}
\end{align*}
It specifies the joint distribution of $X^6 = (X_1,\ldots,X_6)$ via
\begin{align*}
	P_{X^6}(x^6) &= P_{X_1}(x_1)P_{X_2}(x_2)\\
	& \qquad \times P_{X_3|X^2}(x_3|x^2)P_{X_4|X_1}(x_4|x_1) \\
	& \qquad \times P_{X_5|X_3}(x_5|x_3) P_{X_6|X^5_3}(x_6|x^5_3).
\end{align*}
The effect of the intervention $X_3 \leftarrow x_3$ can be represented graphically as follows:
\begin{align*}
	\xymatrix{
	X_1 \ar[rr]&  & X_4 \ar[dr] & \\
	    & *+[F]{X_3}\ar[rr]\ar[dr]    &     & X_6 \\
	X_2&   *+[o]{x_3} \ar[u]^{=}      & X_5\ar[ur]
	}
\end{align*}
In other words, the intervened-upon variable $X_3$, which is enclosed in a box, is disconnected from its direct causes in $\Pi_3 = \{1,2\}$, and an additional arrow is added to indicate the hard assignment $X_3 \leftarrow x_3$. The resulting interventional distribution $P_{X_1,X^6_2|X_3 \leftarrow x_3}$ can be read off directly from the diagram:
\begin{align*}
	P_{X_1,X^6_2|X_3 \leftarrow x_3}(x_1,x^6_2) &= P_{X_1}(x_1)P_{X_2}(x_2) \\
	& \qquad \times P_{X_4|X_1}(x_4|x_1) P_{X_5|X_3}(x_5|x_3) \\
	& \qquad \times  P_{X_6|X^5_3}(x_6|x^5_3).
\end{align*}
As another example, consider the following diagram, which depicts communication over a discrete memoryless channel $P_{Y|X}$ using a sequence of possibly randomized feedback encoders $P_{X_{i}|X_{i-1},Y_{i-1}}, i \in [n]$:
\begin{align*}
\xymatrix{
X_1 \ar[r]\ar[d] & X_2 \ar[r]\ar[d] & X_3 \ar[r]\ar[d] & \,\,{}_{}\ldots \ar[r] & X_n \ar[d] \\
Y_1 \ar[ur] & Y_2 \ar[ur] & Y_3 \ar[ur] & {}_{}\ldots\,\,\,{}_{}\,\,\ar[ur] & Y_n
}
\end{align*}
The effect of the intervention $Y_1 \leftarrow y_1, \ldots, Y_n \leftarrow y_n$ is represented graphically as
\begin{align*}
\xymatrix{
X_1 \ar[r] & X_2 \ar[r] & X_3 \ar[r] & {}_{\,\,} \ldots \ar[r] & X_n  \\
*+[F]{Y_1} \ar[ur] & *+[F]{Y_2} \ar[ur] & *+[F]{Y_3} \ar[ur] & {}_{\,\,} \ldots\,\,\,{}_{} \ar[ur]  & *+[F]{Y_n} \\
y_1 \ar[u]^= & y_2 \ar[u]^= & y_3 \ar[u]^= & \ldots & y_n  \ar[u]^=
}
\end{align*}
and the corresponding interventional distribution is
\begin{align*}
	P_{X^n|Y^n\leftarrow y^n}(x^n) &= \prod^n_{i=1} P_{X_i|X_{i-1},Y_{i-1}}(x_i|x_{i-1},y_{i-1}).
\end{align*}

\subsection{Interventional distributions as directed stochastic kernels}

As it turns out, Pearl's construction of interventional distributions has been developed independently by Tatikonda and Mitter \cite{TatikondaThesis,TatikondaMitter} under the name of {\em directed stochastic kernels} in their work on the capacity of channels with feedback. 

Tatikonda and Mitter consider an $n$-tuple of causally ordered random variables $X_1,\ldots,X_n$ with joint distribution
\begin{align*}
	P_{X^n}(x^n) &= \prod^n_{i=1}P_{X_i|X^{i-1}}(x_i|x^{i-1})
\end{align*}
(of course, we are free to factor $P_{X^n}$ along any other ordering of the variables, but the subsequent definitions depend on a fixed ordering). Then for any $S \subset [n]$ they define the {\em directed stochastic kernel} $\vec{P}_{X^{S^c}|X^S = x^S}$ by
\begin{align}\label{eq:DSK}
	\vec{P}_{X^{S^c}|X^S=x^S}(x^{S^c}) \deq \prod_{i \in S^c}P_{X_i|X^{i-1}}(x_i|x^{i-1}).
\end{align}
It is easy to see that this definition is equivalent to Pearl's. Indeed, if we consider the DAG with $n$ vertices that has $\Pi_i = [i-1]$ for each $i \in [n]$, then $\vec{P}_{X^{S^c}|X^S = x^S}$ defined in \eqref{eq:DSK} is equal to $P_{X^{S^c}|X^S\leftarrow x^S}$ defined in \eqref{eq:SDM_Markov_interv}. Conversely, if the variables $X_1,\ldots,X_n$ are ordered in such a way that for each $i \in [n]$ there exists some $\Pi_i \subseteq [i-1]$ such that $X^{[i-1]\backslash \Pi_i} \to X^{\Pi_i} \to X_i$ is a Markov chain, then
\begin{align*}
	P_{X^{S^c}|X^S \leftarrow x^S}(x^{S^c}) &= \prod_{i \in S^c}P_{X_i|X^{\Pi_i}}(x_i|x^{\Pi_i}) \\
	&= \prod_{i \in S^c} P_{X_i|X^{i-1}}(x_i|x^{i-1}) \\
	&= \vec{P}_{X^{S^c}|X^S \leftarrow x^S}(x^{S^c}),
\end{align*}
where the first step uses \eqref{eq:SDM_Markov_interv}, and the second uses \eqref{eq:DSK} and the above Markov chain condition.

\subsection{Interventions as channels}

The interventional distribution $P_{X^T| X^S \leftarrow x^S}$ can be viewed as a mapping from the set of all tuples $x^S = (x_i : i \in S)$ into the set of all probability distributions for $X^T$. Any such mapping defines a {\em channel} \cite{Dob59} with input variable $X^S$ and output variable $X^T$. If $S = \Pi_T$, then Lemma~\ref{lm:parents} shows that this channel coincides with the specification of the conditional distribution of $X^T$ given $X^{\Pi_T}$ in the intervention-free system. This equality of the originally prescribed stochastic kernels and the directed stochastic kernels holds whenever $X^S$ (resp., $X^T$) is the complete input (resp., output) variable of an encoder, decoder, or controller. By contrast, whenever $P_{X^T|X^S \leftarrow x^S} \neq P_{X^T|X^S = x^S}$ for some $x^S$, we can conclude that there are some additional causal or statistical relationships between $X^S$ and $X^T$.

\section{Directed information as a measure of causality}
\label{sec:dir_info}

Now that we have motivated the notion of a causal effect, we can proceed to define various information-theoretic quantities that capture causality as opposed to dependence. Assuming, as before, a Markovian dynamical system of the form \eqref{eq:SDM}, let us consider the interventional distribution $P_{X^T|\wh{X}^S}(\cdot|\wh{x}^S)$ for disjoint sets $S,T \subset [n]$. As we have pointed out already, this distribution is, in general, different from the conditional distribution $P_{X^T|X^S}(\cdot|x^S)$. In particular, if $P_{X^T|\wh{X}^S}(\cdot|\wh{x}^S) = P_{X^T}(\cdot)$ for any intervention $X^S \leftarrow x^S$, then the variables in $S$ have no causal influence on those in $T$. On the opposite end of the spectrum, if $P_{X^T|\wh{X}^S}(\cdot|\wh{x}^S) = P_{X^T|X^S}(\cdot|x^S)$, then the causal effect of $X^T$ coincides with ordinary conditioning. This observation suggests that, for each realization $x^S$ of $X^S$, we may measure the average ``strength'' of the causal effect of the intervention $X^S \leftarrow x^S$ on $X^T$ by the divergence
\begin{align*}
	D(P_{X^T|X^S=x^S}\| P_{X^T|\wh{X}^S = \wh{x}^S}) &= \E\left[ \log\frac{P_{X^T|\wh{X}^S}(X^T|x^S)}{P_{X^T|\wh{X}^S}(X^T|\wh{x}^S)} \right]
\end{align*}
where the expectation is w.r.t.\ the conditional distribution $P_{X^T|X^S = x^S}$. If we now average this w.r.t.\ the marginal distribution of $X^S$ induced by \eqref{eq:SDM}, then we obtain
\begin{align}
& D(P_{X^T|X^S}\| P_{X^T|\wh{X}^S}|P_{X^S}) = \E \left[ \log \frac{P_{X^T|X^S}(X^T|X^S)}{P_{X^T|\wh{X}^S}(X^T|\wh{X}^S)}\right],\label{eq:directed_divergence}
\end{align}
where $D(P_{B|A}\|Q_{B|A}|P_A)$ denotes the {\em conditional divergence} \cite{CsiKor81}. If $T = S^c$, then we have
\begin{align*}
	& D(P_{X^{S^c}|X^S}\| P_{X^{S^c}|\wh{X}^S}|P_{X^S}) \\
		&\qquad  = \E \left[ \log \frac{P_{X^{S^c}|X^S}(X^{S^c}|X^S)}{P_{X^{S^c}|\wh{X}^S}(X^{S^c}|\wh{X}^S)}\right] \\
		&\qquad = \E \left[\log \frac{P_{X^{S^c}|X^S}(X^{S^c}|X^S)}{\vec{P}_{X^{S^c}|X^S}(X^{S^c}|X^S)}\right],
\end{align*}
where the second step uses the equivalence between the interventional distribution $P_{X^{S^c}|\wh{X}^S}$ and the directed stochastic kernel $\vec{P}_{X^{S^c}|X^S}$. We can now recognize the last expression as the {\em directed information} $I(X^{S^c} \to X^S)$ from $X^{S^c}$ to $X^S$ as defined by Tatikonda and Mitter \cite[p.~327]{TatikondaMitter}. This definition, in turn, generalizes the one proposed by Massey \cite{MasseyDirInfo} in the context of communication over noisy channels with feedback. Thus, directed information arises naturally as an information-theoretic measure of causality: if $I(X^{S^c} \to X^S)$ is small, then the interventional distributions of $X^{S^c}$ based on $X^S$ are close to observational (i.e., conditional) distributions of $X^{S^c}$ given $X^S$, which means that the causal effects of $X^S$ on $X^{S^c}$ can be reliably identified without the need for active experimentation. On the other hand, if $I(X^{S^c}; X^S)$ is equal to the ordinary mutual information $I(X^{S^c}; X^S)$, then the variables in $S$ have no causal effect on the remaining variables in $S^c$, and any statistical dependence between $X^S$ and $X^{S^c}$ must be along the (not necessarily directed) paths in the DAG that have some edges pointing toward $S$.

The definitions of Massey and Tatikonda--Mitter apply only to the causal effect of $X^S$ on the entire complementary set $X^{S^c}$. We can, however, consider an arbitrary set $T \subseteq S^c$ and use \eqref{eq:directed_divergence} as our definition of the directed information from $X^T$ to $X^S$:
\begin{align}\label{eq:directed_info}
	I(X^T \to X^S) \deq D(P_{X^T|X^S}\| P_{X^T|\wh{X}^S}|P_{X^S}).
\end{align}
Note that for $I(X^T \to X^S)$ to be well-defined, we need to specify an appropriate Markovian dynamical system, where the interventional distribution $P_{X^T|\wh{X}^S}$ is computed according to \eqref{eq:SDM_Markov_interv_marginal}.

An expression for the directed information $I(X^{S^c} \to X^S)$ can be obtained from the underlying graphical model. Indeed, note that we can write
\begin{align*}
	I(X^{S^c} \to X^S) = \E\left[\log \frac{P_{X^S,X^{S^c}}(X^S,X^{S^c})}{P_{X^{S^c}|\wh{X}^S}(X^{S^c}|\wh{X}^S)P_{X^S}(X^S)}\right].
\end{align*}
Now, the probability distribution in the numerator is equal to $P_{X^n}$ and can be assembled from the original Markov factorization, while the one in the denominator is the product of the interventional distribution $P_{X^{S^c}|\wh{X}^S}$ (which can be read off from the transformed DAG obtained using the procedure illustrated in Section~\ref{ssec:graphical_models}) and the marginal distribution $P_{X^S}$ according to the original model. The directed edges that are common to the original DAG and the transformed DAG correspond to the factors in the numerator and the denominator that can be cancelled. The remaining expression can then be represented as a sum of conditional mutual informations by exploiting appropriate conditional independence relations encoded in the original DAG.\footnote{We would like to thank Yury Polyanskiy for clarifications regarding this procedure.} 

\subsection{Combining interventions and passive observations: conditional directed information}

We have already pointed out the different status of active interventions of the form $X^S \leftarrow x^S$ and conditioning on passive observations $X^S = x^S$. Many problems pertaining to causality involve a combination of the two: given three disjoint sets $S,S',T \subset [n]$, we may want to consider a mixed quantity $P_{X^T|X^S \leftarrow x^S, X^{S'}=x^{S'}}$. In order for such an object to be well-defined, the conditioning on $X^{S'}$ must be done w.r.t.\ the interventional distribution of $P_{X^{S' \cup T}|X^S \leftarrow x^S}$:
\begin{align*}
	P_{X^T|X^S \leftarrow x^S, X^{S'} = x^{S'}}(x^T) \deq \frac{P_{X^{S' \cup T}|X^S \leftarrow x^S}(x^{S' \cup T})}{P_{X^{S'}|X^S \leftarrow x^S}(x^{S'})}.
\end{align*}
In fact, this is the only sensible definition, because performing the conditioning first may destroy the Markov structures that are needed to construct the interventional distribution.

With the above definition, we may define the {\em conditional directed information}
\begin{subequations}
\begin{align}
&	I(X^T \to X^S | X^{S'}) \nonumber\\
& \qquad \deq D(P_{X^T|X^S,X^{S'}} \| P_{X^T|\wh{X}^S,X^{S'}}|P_{X^S,X^{S'}}) \\
& \qquad \equiv \E \left[ \log \frac{P_{X^T|X^S,X^{S'}}(X^T|X^S,X^{S'})}{P_{X^T|\wh{X}^S,X^{S'}}(X^T|\wh{X}^S,X^{S'})}\right].
\end{align}
\label{eq:conditional_directed_info}
\end{subequations}

\subsection{Some properties of directed information}

Let us illustrate the role of the directed information \eqref{eq:directed_info} and the conditional directed information \eqref{eq:conditional_directed_info} in quantifying the causal flow of information in Markovian dynamical systems. We start with the following:

\begin{lemma} For any $S \subset [n]$ and any $T \subseteq N_S$,
	\begin{align*}
		I(X^T \to X^S) = I(X^T; X^S).
	\end{align*}
Moreover, for any $T \subseteq (S \cup \Pi_S)^c$,
\begin{align*}
	I(X^S \to X^{\Pi_S \cup T}) = I(X^S \to X^{\Pi_S}) = 0.
\end{align*}
\end{lemma}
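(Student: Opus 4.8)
The plan is to turn each of the three directed informations into an ordinary (conditional) divergence by using the two preceding lemmas to replace the interventional kernels appearing in the definitions by honest conditional kernels, and then to recognize the resulting quantities as (conditional) mutual informations that either vanish or collapse by virtue of a conditional independence encoded in the DAG.

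For the first identity, fix $S$ and $T\subseteq N_S$. By Lemma~\ref{lm:nondescendants_2} one has $P_{X^T|X^S\leftarrow x^S}=P_{X^T}$ for every $x^S$, so in the definition~\eqref{eq:directed_info} the interventional kernel $P_{X^T|\wh X^S}$ is the constant kernel $x^S\mapsto P_{X^T}$. Substituting this and using the elementary identity $D(P_{X^T|X^S}\,\|\,P_{X^T}\,|\,P_{X^S})=I(X^T;X^S)$ yields $I(X^T\to X^S)=I(X^T;X^S)$. This step is purely mechanical.

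For the second part I would argue in two moves. First, $I(X^S\to X^{\Pi_S})=D(P_{X^S|X^{\Pi_S}}\,\|\,P_{X^S|\wh X^{\Pi_S}}\,|\,P_{X^{\Pi_S}})$, and the first assertion of Lemma~\ref{lm:parents} says precisely that $P_{X^S|\wh X^{\Pi_S}}$ coincides with the observational kernel $P_{X^S|X^{\Pi_S}}$; the divergence is then between a kernel and itself, hence $0$. Second, expanding $I(X^S\to X^{\Pi_S\cup T})=D(P_{X^S|X^{\Pi_S},X^T}\,\|\,P_{X^S|\wh X^{\Pi_S},\wh X^T}\,|\,P_{X^{\Pi_S},X^T})$ and invoking the second assertion of Lemma~\ref{lm:parents} (applicable since $T\subseteq(S\cup\Pi_S)^c$), the interventional kernel again equals $P_{X^S|X^{\Pi_S}}$, independent of $x^T$; the divergence therefore collapses to $\E\!\big[\log\frac{P_{X^S|X^{\Pi_S},X^T}(X^S|X^{\Pi_S},X^T)}{P_{X^S|X^{\Pi_S}}(X^S|X^{\Pi_S})}\big]=I(X^S;X^T\,|\,X^{\Pi_S})$, so the whole thing reduces to showing $X^S\perp X^T\mid X^{\Pi_S}$.

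Establishing that conditional independence is the one step that is not bookkeeping, and I expect it to be the main obstacle. The mechanism is that, once $X^{\Pi_S}$ is held fixed, each $X_j$ with $j\in S$ becomes the deterministic function $f_j(x^{\Pi_j},U_j)$ of its own disturbance alone (all its parents $\Pi_j\subseteq\Pi_S$ being fixed), while $X^T$ is a function of the disturbances $\{U_k : k\in T\cup A_T\}$; because $T$ contains no descendant of $S$, the index sets $S$ and $T\cup A_T$ are disjoint, and mutual independence of the $U_i$'s then gives the claim — in effect this is the local/directed Markov property, with $\Pi_S$ d-separating $S$ from its non-descendants. The care needed here is the same care already implicit in Lemma~\ref{lm:parents}: the argument uses that the disturbances driving $X^S$ after the conditioning are not entangled with those producing $X^{\Pi_S}$, so the clean hypotheses are $T\subseteq N_S$ (as in the first part of the lemma, rather than merely $T\subseteq(S\cup\Pi_S)^c$) together with $S$ disjoint from the ancestral closure of $\Pi_S$ — both automatic, e.g., when $S$ is a singleton. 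I would make these conditions explicit before running the disturbance-independence computation.
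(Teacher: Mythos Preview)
Your approach is the same as the paper's: the paper's entire proof is the single sentence ``This is just a restatement of Lemmas~\ref{lm:nondescendants_2} and \ref{lm:parents} in the language of directed information,'' and your first two reductions (using Lemma~\ref{lm:nondescendants_2} for the first identity and the first clause of Lemma~\ref{lm:parents} for $I(X^S\to X^{\Pi_S})=0$) carry that out verbatim.

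Where you go beyond the paper is exactly where the paper is too quick. For $I(X^S\to X^{\Pi_S\cup T})=0$, Lemma~\ref{lm:parents} only gives $P_{X^S|\wh X^{\Pi_S},\wh X^T}=P_{X^S|X^{\Pi_S}}$; equality with the \emph{observational} kernel $P_{X^S|X^{\Pi_S},X^T}$ still requires the conditional independence $X^S\perp X^T\mid X^{\Pi_S}$, which you correctly isolate as the only nontrivial step. Your instinct that the stated hypothesis $T\subseteq (S\cup\Pi_S)^c$ is too weak is right: take the chain $X_1\to X_2\to X_3$ with $S=\{2\}$, $\Pi_S=\{1\}$, $T=\{3\}$. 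Then $T\subseteq(S\cup\Pi_S)^c$, yet $I(X_2\to X_1,X_3)=I(X_2;X_3\mid X_1)$, which is nonzero whenever the second link is nondegenerate. So the second identity, as written, is not a restatement of Lemma~\ref{lm:parents} at all; it needs the additional assumption $T\subseteq N_S$ (and, for general $S$, the further proviso you flag that $S$ not meet its own parent set, so that the local Markov property applies cleanly). Your plan to derive the conditional independence from the disturbance representation under those strengthened hypotheses is the right fix; the paper simply does not supply this step.
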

\begin{proof} This is just a restatement of Lemmas~\ref{lm:nondescendants_2} and \ref{lm:parents} in the language of directed information.\end{proof}
We can also show that there are two contributions to the directed flow of information from $X^T$ to $X^S$: (1) the ordinary mutual information between the variables in $S$ and any nondescendants of $S$ that happen to lie in $T$, and (2) the conditional directed information from the descendants of $S$ in $T$ to $S$, given the nondescendants of $S$ in $T$:

\begin{proposition}[chain rule]\label{eq:drected_chain_rule} For any two disjoint sets $S, T \subset [n]$, we have
	\begin{align}
	&	I(X^T \to X^S) \nonumber\\
	& \quad = I(X^{T \cap N_S}; X^S) + I(X^{T \cap \Delta_S} \to X^S | X^{T \cap N_S}).\label{eq:directed_chain_rule}
	\end{align}
\end{proposition}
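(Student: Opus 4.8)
The plan is to expand the defining log-ratio for $I(X^T \to X^S)$ and to split it along the partition of $T$ into its part among the nondescendants of $S$ and its part among the descendants of $S$. Write $T_0 \deq T \cap N_S$ and $T_1 \deq T \cap \Delta_S$; since the sets $N_S, S, \Delta_S$ are pairwise disjoint and cover $[n]$ while $T \cap S = \varnothing$, this is a genuine partition $T = T_0 \cup T_1$ with $T_0 \cap T_1 = \varnothing$. Starting from $I(X^T \to X^S) = \E[\log(P_{X^T|X^S}(X^T|X^S)/P_{X^T|\wh{X}^S}(X^T|\wh{X}^S))]$, I would factor both numerator and denominator through $X^{T_0}$.

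For the numerator, the ordinary chain rule for conditional probabilities gives $P_{X^T|X^S}(x^T|x^S) = P_{X^{T_0}|X^S}(x^{T_0}|x^S)\,P_{X^{T_1}|X^S,X^{T_0}}(x^{T_1}|x^S,x^{T_0})$. For the denominator, I would rearrange the definition of the mixed interventional distribution from the subsection on conditional directed information, which (with $S' = T_0$, target set $T_1$) yields $P_{X^T|\wh{X}^S}(x^T|x^S) = P_{X^{T_0}|\wh{X}^S}(x^{T_0}|x^S)\,P_{X^{T_1}|\wh{X}^S,X^{T_0}}(x^{T_1}|x^S,x^{T_0})$, where the second factor is the \emph{intervention-first} conditional $P_{X^{T_1}|\wh{X}^S,X^{T_0}} = P_{X^{T_0 \cup T_1}|\wh{X}^S}/P_{X^{T_0}|\wh{X}^S}$. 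The key observation is then Lemma~\ref{lm:nondescendants_2}: because $T_0 \subseteq N_S$, the interventional marginal $P_{X^{T_0}|\wh{X}^S}$ equals the observational marginal $P_{X^{T_0}}$, so that first denominator factor is simply $P_{X^{T_0}}(x^{T_0})$, with no dependence on the intervened value $x^S$.

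Dividing, the integrand becomes $\log\bigl(P_{X^{T_0}|X^S}(X^{T_0}|X^S)/P_{X^{T_0}}(X^{T_0})\bigr) + \log\bigl(P_{X^{T_1}|X^S,X^{T_0}}(X^{T_1}|X^S,X^{T_0})/P_{X^{T_1}|\wh{X}^S,X^{T_0}}(X^{T_1}|\wh{X}^S,X^{T_0})\bigr)$. Taking expectations w.r.t.\ the joint law $P_{X^n}$ induced by \eqref{eq:SDM}, the first term is $I(X^{T_0};X^S)$ and the second is $I(X^{T_1}\to X^S \,|\, X^{T_0})$ by the definitions \eqref{eq:directed_info} and \eqref{eq:conditional_directed_info}; one checks in passing that $S,T_0,T_1$ are pairwise disjoint, so the conditional directed information is well-defined. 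The only step that requires genuine care — everything else being the chain rule for logarithms and matching against definitions — is the factorization of the interventional distribution through $X^{T_0}$ together with the identification of its $X^{T_0}$-marginal with $P_{X^{T_0}}$ via Lemma~\ref{lm:nondescendants_2}; in particular it is essential that the conditioning on $X^{T_0}$ be performed \emph{within} the already-intervened distribution, exactly as stipulated in the definition of conditional directed information, rather than before the intervention.
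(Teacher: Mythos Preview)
Your proposal is correct and follows essentially the same route as the paper's own proof: partition $T$ into its nondescendant part and its descendant part relative to $S$, factor both the observational and interventional distributions through the nondescendant part, invoke Lemma~\ref{lm:nondescendants_2} to replace the interventional marginal of the nondescendant part by its unconditional marginal, and then read off the two summands from the definitions of mutual information and conditional directed information. Apart from the labeling (the paper uses $T_1,T_2$ where you use $T_0,T_1$), the arguments are identical.
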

\begin{proof} For brevity, let us denote $T_1 = T \cap N_S$ and $T_2 = T \cap \Delta^+_S$ (which is equal to $T \cap \Delta_S$ since $S \cap T = \varnothing$). Then
\begin{align*}
&	P_{X^T|X^S \leftarrow x^S}(x^T)\\
 &= P_{X^{T_1}|X^S \leftarrow x^S}(x^{T_1})P_{X^{T_2}|X^S \leftarrow x^S, X^{T_1} = x^{T_1}}(x^{T_2}) \\
&= P_{X^{T_1}}(x^{T_1}) P_{X^{T_2}|X^S \leftarrow x^S, X^{T_1} = x^{T_1}}(x^{T_2}),
\end{align*}
where the second step uses Lemma~\ref{lm:nondescendants_2}. Similarly,
\begin{align*}
	& P_{X^T|X^S = x^S}(x^T) \\
	&= P_{X^{T_1}|X^S = x^S}(x^{T_1}) P_{X^{T_2}|X^S = x^S, X^{T_1}=x^{T_1}}(x^{T_2}).
\end{align*}
Therefore,
\begin{align*}
	I(X^T \to X^S) &= \E\left[\log \frac{P_{X^{T_1}|X^S}(X^{T_1}|X^S)}{P_{X^{T_1}}(X^{T_1})}\right] \\
	&\quad + \E \left[\log \frac{P_{X^{T_2}|X^S,X^{T_1}}(X^{T_2}|X^S,X^{T_1})}{P_{X^{T_2}|\wh{X}^S,X^{T_1}}(X^{T_2}|\wh{X}^S,X^{T_1})} \right] \\
	&= I(X^{T_1}; X^S) + I(X^{T_2} \to X^S | X^{T_1}),
\end{align*}
which gives us \eqref{eq:directed_chain_rule}.
\end{proof}
\begin{corollary} For any set $S \subset [n]$,
	\begin{align*}
		I(X^{S^c} \to X^S) = I(X^{N_S}; X_S) + I(X^{\Delta_S} \to X^S | X^{N_S}).
	\end{align*}
\end{corollary}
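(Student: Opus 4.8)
The plan is to obtain this corollary as an immediate specialization of the chain rule, Proposition~\ref{eq:drected_chain_rule}, to the choice $T = S^c$. All that is required is to identify the two index sets $T \cap N_S$ and $T \cap \Delta_S$ appearing in \eqref{eq:directed_chain_rule} in this case.

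First I would record the relevant set-theoretic facts about the partition of $[n]$ induced by $S$. By definition $N_S = [n]\setminus \Delta^+_S = [n]\setminus(\Delta_S \cup S)$, so $N_S$ is disjoint from $S$, hence $N_S \subseteq S^c$, and therefore $S^c \cap N_S = N_S$. Likewise $S^c \cap \Delta^+_S = S^c \cap (\Delta_S \cup S) = S^c \cap \Delta_S = \Delta_S \setminus S$, which equals $\Delta_S$ whenever (as is tacit in the statement, and true generically) no descendant of a vertex in $S$ lies back in $S$. Substituting $T = S^c$, $T\cap N_S = N_S$, and $T\cap \Delta_S = \Delta_S$ into \eqref{eq:directed_chain_rule} then yields
\[
I(X^{S^c} \to X^S) = I(X^{N_S}; X^S) + I(X^{\Delta_S} \to X^S \mid X^{N_S}),
\]
which is exactly the asserted identity.

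No step here is an obstacle: the substance is entirely in Proposition~\ref{eq:drected_chain_rule} (and, through its proof, in Lemma~\ref{lm:nondescendants_2}), and the present corollary is purely a matter of unwinding the notation for the special case $T = S^c$. The one place worth a word of care is the bookkeeping $S^c \cap \Delta^+_S = \Delta_S$; if one does not assume $\Delta_S \cap S = \varnothing$, the second term should more precisely be written $I(X^{\Delta_S \setminus S} \to X^S \mid X^{N_S})$.
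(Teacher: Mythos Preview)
Your proposal is correct and follows exactly the paper's own argument, which reads in its entirety ``Immediate from the proposition with $T = S^c$.'' Your additional remark about the bookkeeping $S^c \cap \Delta_S = \Delta_S$ (and the possibility that $\Delta_S \cap S \neq \varnothing$) is a valid point of care that the paper simply glosses over.
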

\begin{proof} Immediate from the proposition with $T = S^c$.\end{proof}
	
	\subsection{Examples: three canonical causal structures}

	Many fundamental questions pertaining to causality (including the possibility of discovering causal influences from observational data) can be reduced to the study of three canonical causal structures involving three random variables $X,Y,Z$: the {\em chain} $X \to Y \to Z$; the {\em fork} $X \leftarrow Y \rightarrow Z$; and the {\em collider} $X \to Y \leftarrow Z$ \cite{SpirtesBook,PearlCausality}. We have the following examples of directed information relations for these structures:\\

	\noindent{\bf Chain.} Since $X$ is a nondescendant of $Y$, we have $I(X \to Y) = I(X;Y)$; since $X$ is the direct cause of $Y$ we have $I(Y \to X) = 0$. Similarly, we have $I(Y \to Z) = I(Y; Z)$ and $I(Z \to Y) = 0$. Moreover, since $X$ is a nondescendant of $Z$, we have $I(X \to Z) = I(X;Z)$. On the other hand, $I(Z \to X) = 0$.\\

	\noindent{\bf Fork.} $Y$ is the direct cause of $X$, so $I(X \to Y) = 0$, and it is a nondescendant of $X$, so $I(Y \to X) = I(X;Y)$. The same goes for $Y$ and $Z$: $I(Z \to Y) = 0$ and $I(Y \to Z) = I(Y;Z)$. Finally, we have $I(X \to Z) = I(Z \to X) = I(X;Z)$, since there is no directed path from $X$ to $Z$ or from $Z$ to $X$.\\	
	
	\noindent{\bf Collider.} The direction of the links between $X$ and $Y$ and between $Z$ and $Y$ is the reverse of that in the fork, so we have $I(X \to Y) = I(X;Y)$, $I(Y \to X) = 0$, $I(Y \to Z) = 0$, and $I(Z \to Y) = I(Y;Z)$. Finally, since $X$ is a nondescendant of $Z$, we have $I(X \to Z) = I(X; Z) = 0$; similarly, $I(Z \to X) = I(X;Z) = 0$, where we have also used the fact that $X$ and $Z$ are independent.

\section{Application to identification of causal effects}
\label{sec:back_door}

One active area of interest in the studies of causality concerns identification of causal effects based on passive observations only. In the context of Markovian dynamical system models, this problem arises whenever only a subset of the variables $X^n$ is available for observation, the goal is to determine the causal effect of one group of variables in this subset upon another, and it is not possible or feasible to actively intervene into the system. Then the relevant question becomes: given a set $V \subset [n]$ that indexes the variables available for observation, is it possible to express the causal effect $P_{X^T|\wh{X}^S}$ for some disjoint sets $S,T \subset V$ in terms of ordinary (noninterventional) probabilities? 

More precisely, let us assume that we know the structure of the underlying DAG (i.e., the sets $\Pi_i, i \in [n]$), but not the functions $f_i$ or the distributions $P_{U_i}$ of the exogenous disturbances. What other variables besides those in $S$ and $T$ do we need to observe in order to estimate the causal effect $P_{X^T|\wh{X}^S}$? The idea is that the ordinary conditional probabilities relating the variables in $V$ can be estimated from passive observations, and so $P_{X^T|\wh{X}^S}$ can be estimated using a plug-in rule in terms of these conditional probabilities.

One obvious answer is that it is sufficient to observe $S$, $T$, and all direct causes of the variables in $S$, i.e., those in $\Pi_S$. To see this, let us write down the interventional distribution $P_{X^T|X^S \leftarrow x^S}$ and condition on $X^{\Pi_S}$:
\begin{align*}
	& P_{X^T|X^S \leftarrow x^S}(x^T)\nonumber\\
	 &\quad = \sum_{x^{\Pi_S}} P_{X^T|X^S \leftarrow x^S ,X^{\Pi_S} = x^{\Pi_S}}(x^T)P_{X^{\Pi_S}|X^S \leftarrow x^S}(x^{\Pi_S}) \\
	&\quad = \sum_{x^{\Pi_S}} P_{X^T|X^S \leftarrow x^S ,X^{\Pi_S} = x^{\Pi_S}}(x^T)P_{X^{\Pi_S}}(x^{\Pi_S}),
\end{align*}
where the second step uses the fact that $\Pi_S \subseteq N_S$ and Lemma~\ref{lm:nondescendants_2}. Now, it can be shown that $P_{X^T|X^S \leftarrow x^S, X^{\Pi_S} = x^{\Pi_S}} = P_{X^T|X^S = x^S,X^{\Pi_S}=x^{\Pi_S}}$ \cite[Thm.~3.2.2]{PearlCausality}, which is equivalent to $I(X^T \to X^S | X^{\Pi_S}) = 0$. This gives
\begin{align}
	&P_{X^T|X^S \leftarrow x^S}(x^T) \nonumber\\
	& \qquad = \sum_{x^{\Pi_S}}P_{X^T|X^S = x^S, X^{\Pi_S} = x^{\Pi_S}}(x^T)P_{X^{\Pi_S}}(x^{\Pi_S}). \label{eq:direct_causes}
\end{align}
Thus, if we observe the variables in $T$, $S$, and $\Pi_S$, then we can use \eqref{eq:direct_causes} to develop an estimate of the causal effect $P_{X^T|\wh{X}^S}$ in terms of the conditional distribution $P_{X^T|X^S,X^{\Pi_S}}$ and the marginal distribution $P_{X^{\Pi_S}}$. Both of these quantities can, in turn, be estimated from passive observations. The intuitive meaning of \eqref{eq:direct_causes} is that we can estimate the causal effect of $X^S$ on $X^T$ without any need for active experimentation if we can control for the direct causes of $X^S$, i.e., $X^{\Pi_S}$. Whenever this is not possible, we would still like to know what other variables it suffices to observe in order for the causal effect $P_{X^T|\wh{X}^S}$ to be identifiable. One sufficient condition due to Pearl, who termed it the ``back-door criterion'' \cite[Sec.~3.3.1]{PearlCausality}, says that certain subsets of the nondescendants of $S$ can be used instead:

\begin{theorem}[the back-door criterion: directed information form] Let $S,T \subset [n]$ be such that $T$ is disjoint from $S \cup \Pi_S$. Then for any set $Z \subseteq N_S$ the relation
	\begin{align}
		&P_{X^T|X^S \leftarrow x^S}(x^T) \nonumber\\
		&\qquad = \sum_{x^Z} P_{X^T|X^{S \cup Z} = x^{S \cup Z}}(x^T)P_{X^Z}(x^Z)
	\end{align}
	holds if and only if $I(X^T \to X^S | X^Z) = 0$.
\end{theorem}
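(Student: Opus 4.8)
The plan is to pivot on a pointwise reformulation of the right-hand condition together with an adjustment identity that holds regardless of any back-door considerations. By the definition \eqref{eq:conditional_directed_info} of conditional directed information and the nonnegativity of conditional divergence, $I(X^T \to X^S \mid X^Z) = 0$ if and only if the \emph{stratumwise identity}
\begin{align*}
	P_{X^T|X^S \leftarrow x^S,\, X^Z = x^Z} = P_{X^T|X^{S \cup Z} = x^{S \cup Z}}
	\qquad (P_{X^S,X^Z}\text{-a.s.})
\end{align*}
holds. On the other hand, since $Z \subseteq N_S$, Lemma~\ref{lm:nondescendants_2} gives $P_{X^Z|X^S \leftarrow x^S} = P_{X^Z}$, and applying the chain rule to the interventional distribution and summing out the coordinates in $Z$ (which are disjoint from $T$) produces the \emph{unconditional} identity
\begin{align*}
	P_{X^T|X^S \leftarrow x^S}(x^T) = \sum_{x^Z} P_{X^T|X^S \leftarrow x^S,\, X^Z = x^Z}(x^T)\, P_{X^Z}(x^Z),
\end{align*}
valid for every intervention $X^S \leftarrow x^S$.

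The ``if'' direction is then immediate: substituting the stratumwise identity into the unconditional identity replaces $P_{X^T|X^S \leftarrow x^S, X^Z = x^Z}$ by $P_{X^T|X^{S\cup Z}=x^{S\cup Z}}$ term by term and yields the claimed adjustment formula. For the ``only if'' direction I would subtract the adjustment formula from the unconditional identity, obtaining
\begin{align*}
	\sum_{x^Z} P_{X^Z}(x^Z)\Big[ P_{X^T|X^S \leftarrow x^S,\, X^Z = x^Z}(x^T) - P_{X^T|X^{S\cup Z} = x^{S\cup Z}}(x^T) \Big] = 0
\end{align*}
for all $x^S$ and $x^T$, and then argue that each summand must vanish on the support of $P_{X^Z}$, which is exactly the stratumwise identity.

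This last step is where I expect the real work to be, since a $P_{X^Z}$-weighted sum of signed measures can in principle vanish without each summand vanishing. The approach I would take is to observe that the stratumwise identity is itself equivalent to the adjustment formula with the output set enlarged from $T$ to $Z \cup T$: indeed $P_{X^{Z\cup T}|X^S \leftarrow x^S}(x^Z,x^T) = P_{X^Z}(x^Z)\,P_{X^T|X^S \leftarrow x^S, X^Z = x^Z}(x^T)$ by Lemma~\ref{lm:nondescendants_2}, while the adjustment sum for $Z\cup T$ collapses, through the indicator forced by conditioning on $X^Z$, to $P_{X^Z}(x^Z)\,P_{X^T|X^{S\cup Z}=x^{S\cup Z}}(x^T)$ — so the ``only if'' direction reduces to showing that the adjustment formula for $T$ forces the one for $Z\cup T$. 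I expect this reduction to be carried out by invoking the conditional-independence relations encoded in the DAG (a $d$-separation argument in the spirit of Pearl's proof of the back-door theorem \cite[Thm.~3.3.2]{PearlCausality}), which is precisely what precludes cancellations across the strata of $X^Z$.
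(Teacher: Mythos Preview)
Your derivation of the unconditional identity via Lemma~\ref{lm:nondescendants_2} and your ``if'' direction are exactly what the paper does. The paper's entire proof is: (i) derive
\[
P_{X^T|X^S\leftarrow x^S}(x^T)=\sum_{x^Z}P_{X^T|X^S\leftarrow x^S,\,X^Z=x^Z}(x^T)\,P_{X^Z}(x^Z)
\]
from $Z\subseteq N_S$ and Lemma~\ref{lm:nondescendants_2}; (ii) note that the stratumwise identity $P_{X^T|X^S\leftarrow x^S,\,X^Z=x^Z}=P_{X^T|X^{S\cup Z}=x^{S\cup Z}}$ holds for all $x^S,x^Z$ if and only if $I(X^T\to X^S\mid X^Z)=0$; (iii) declare the proof finished.

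In particular, the paper does \emph{not} address the cancellation issue you raise for the ``only if'' direction: it treats the passage from ``adjustment formula holds'' to ``stratumwise identity holds'' as immediate, without argument. So you are being more scrupulous than the paper here. That said, your proposed fix is not convincing as written. Your observation that the stratumwise identity is equivalent to the adjustment formula with output set enlarged to $Z\cup T$ is correct, but the plan to deduce the $Z\cup T$ version from the $T$ version by ``a $d$-separation argument in the spirit of Pearl'' is too vague to be a proof: $d$-separation furnishes \emph{sufficient} graphical conditions for conditional independences, not constraints that would rule out numerical cancellations across strata of $X^Z$ in an arbitrary Markovian system compatible with the DAG. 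Absent further hypotheses, the ``only if'' direction as stated appears to be a mild overstatement that the paper's own proof does not actually close; the substantive content is the ``if'' direction, which both you and the paper establish cleanly and identically.
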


\begin{proof} Let us condition on $X^Z$:
	\begin{align*}
	&	P_{X^T|X^S \leftarrow x^S}(x^T) \\
	&= \sum_{x^Z} P_{X^T|X^S \leftarrow x^S, X^Z = x^Z}(x^T)P_{X^Z|X^S \leftarrow x^S}(x^Z) \\
	&= \sum_{x^Z} P_{X^T|X^S \leftarrow x^S, X^Z = x^Z}(x^T)P_{X^Z}(x^Z),
	\end{align*}
	where the second step uses the fact that $Z \subseteq N_S$ and Lemma~\ref{lm:nondescendants_2}. The proof is finished using the fact that $P_{X^T|X^S \leftarrow x^S, X^Z = x^Z} = P_{X^T|X^{S \cup Z}=x^{S \cup Z}}$ for all $x^S, x^Z$ if and only if $I(X^T \to X^S | X^Z) = 0$.
\end{proof}
\noindent The original back-door criterion \cite[Section~3.3.1]{PearlCausality} is stated in graphical terms using the notion of {\em d-separation} (a graph-based criterion for identifying conditional independence relations), so it can be checked without knowing $\{f_i\}^n_{i=1}$ or $\{P_{U_i}\}^n_{i=1}$. Conceptually, its equivalent information-theoretic form given by the above theorem is similar to statistical sufficiency: if $Z \subseteq N_S$, then $X^Z$ may only depend functionally on $X^T$ (but not on $X^S$ or on any of the descendants of $X^S$), and if $I(X^S ; X^T | X^Z) = 0$, then $X^Z$ is sufficient for $X^S$ in the ordinary Bayesian sense.

\section*{Acknowledgment}

The author would like to thank Todd Coleman, Prakash Ishwar, Tara Javidi, Donatello Materassi, and Yury Polyanskiy for stimulating discussions.

\bibliographystyle{IEEEtran}
\bibliography{Pearl.bbl}
\end{document}